\newtheorem{theorem}{Theorem}[section]
\newtheorem{defn}{Definition}[section]
\newtheorem{remark}{Remark}[section]
\title{\LARGE \bf Risk Identification of Power Transmission System with Renewable Energy}
\author{Chao~Zhai, Gaoxi~Xiao, Hehong Zhang and Tso-Chien Pan \thanks{Chao Zhai, Gaoxi Xiao, Hehong Zhang and Tso-Chien Pan are with Institute of Catastrophe Risk Management, Nanyang Technological University, 50 Nanyang Avenue, Singapore 639798. They are also with Future Resilient Systems, Singapore-ETH Centre, 1 Create Way, CREATE Tower, Singapore 138602. Chao Zhai, Gaoxi Xiao and Hehong Zhang are also with School of Electrical and Electronic Engineering, Nanyang Technological University. Corresponding author: Gaoxi Xiao. Email: {\tt\small egxxiao@ntu.edu.sg}}
}
\begin{document}

\maketitle
\thispagestyle{empty}
\pagestyle{empty}

\begin{abstract}
This paper aims to investigate the risk identification problem of power transmission system that is integrated with renewable energy sources. In practice, the fluctuation of power generation from renewable energy sources can lead to the severe consequences to power transmission network. By treating the fluctuation of power generation as the control input, the risk identification problem is formulated with the aid of optimal control theory. Thus, a control approach is developed to identify the fluctuation of power generation that results in the worst-case cascading failures of power systems. Theoretical analysis is also conducted to obtain the necessary condition for the worst fluctuations of power generation. Finally, numerical simulations are implemented on IEEE 9 Bus System to demonstrate the effectiveness of the proposed approach.
\end{abstract}

\section{INTRODUCTION}\label{sec:int}
In the past decades, renewable energy has attracted much interest and attention from academia, industries and governments due to its great advantages. Compared to traditional energy (e.g., coal, petroleum and natural gas), one major advantage of renewable energy lies in its sustainability, which implies that it will never be exhausted for human beings. Moreover, it can provide clean energy without aggravating greenhouse effects and global warming \cite{pan11}. Therefore, a lot of efforts have been taken to exploit renewable energy for generating electric power as a supplement to the conventional power stations \cite{car06}. Nevertheless, most renewable energy largely depends on the weather condition (e.g., wind speed and light intensity) to generate the power electricity, which may bring about the fluctuation of power generation and even trigger the cascading failure of power grids.

It has been widely investigated for the effect of fluctuating renewable-energy sources on the operation of power grids \cite{mat09,lun03,con12}. For instance, comparative analyses of seven technologies are presented to integrate fluctuating renewable energy sources into the electricity supply in order to identify the most fuel-efficient and least-cost technologies \cite{mat09}. In addition, \cite {lun03} discusses and analyzes different national strategies for solving the problem of ``surplus production" from a fluctuating renewable-energy source. And \cite{con12} examines the role of large-scale energy storage (e.g., pumped hydro) in the integration of fluctuating renewable energy (e.g., wind energy). Actually, the fluctuation of renewable energy sources might trigger cascading failures of power grids, which normally occur as a result of initial contingencies and deteriorate with a sequence of branch outages \cite{lu96}. The initiating events of cascading failure may result from the outage of a certain transmission line or the fluctuation of injected power on buses. Some approaches are developed to identify the initiating events that trigger cascading failure, and they include the random chemistry algorithm \cite{epp12}, nonlinear programming method \cite{kim16} and optimal control approach \cite{czm17}. Regarding the optimal control approach, the main idea is to identify critical risks of power grids by treating the initial disturbances as control inputs \cite{czm17,cziw17,zhang17}. To be specific, \cite{czm17} develops an identification algorithm for determining the admittance changes on selected branches that can cause the worst-case cascading failure of power systems. And \cite{cziw17} investigates the problem of identifying initial disturbances on branches by comparing three different cascading failure models. In addition, \cite{zhang17} proposes a numerical algorithm to identify both the critical branches and the admittance changes in power transmission systems. To the best of our knowledge, it is largely unexplored to identify the severe fluctuation of injected power on buses of power grids from a control perspective. Thus, this work aims to look into the worst fluctuations of power generation from renewable energy sources in the framework of optimal control theory. In short, the main contributions of this paper
are summarized as follows:
\begin{enumerate}
  \item Formulate the risk identification problem of power transmission system integrated with renewable energy using optimal control theory.
  \item Provide theoretical results on the necessary condition of the worst fluctuations of power generation from renewable energy sources.
  \item Validate the proposed approach of risk identification on the IEEE 9 Bus System.
\end{enumerate}
The remainder of this paper is organized as follows. Section \ref{sec:prob} presents the problem formulation of identifying the fluctuation of power generation from renewable energy sources. Section \ref{sec:ana} provides theoretical results on the necessary condition of the worst fluctuations on generator buses. Numerical simulations on IEEE $9$ Bus System are carried out in Section \ref{sec:sim}. Finally, we draw a conclusion and discuss the future work in Section \ref{sec:con}.

\section{PROBLEM FORMULATION}\label{sec:prob}
\begin{figure}
\scalebox{0.05}[0.05]{\includegraphics{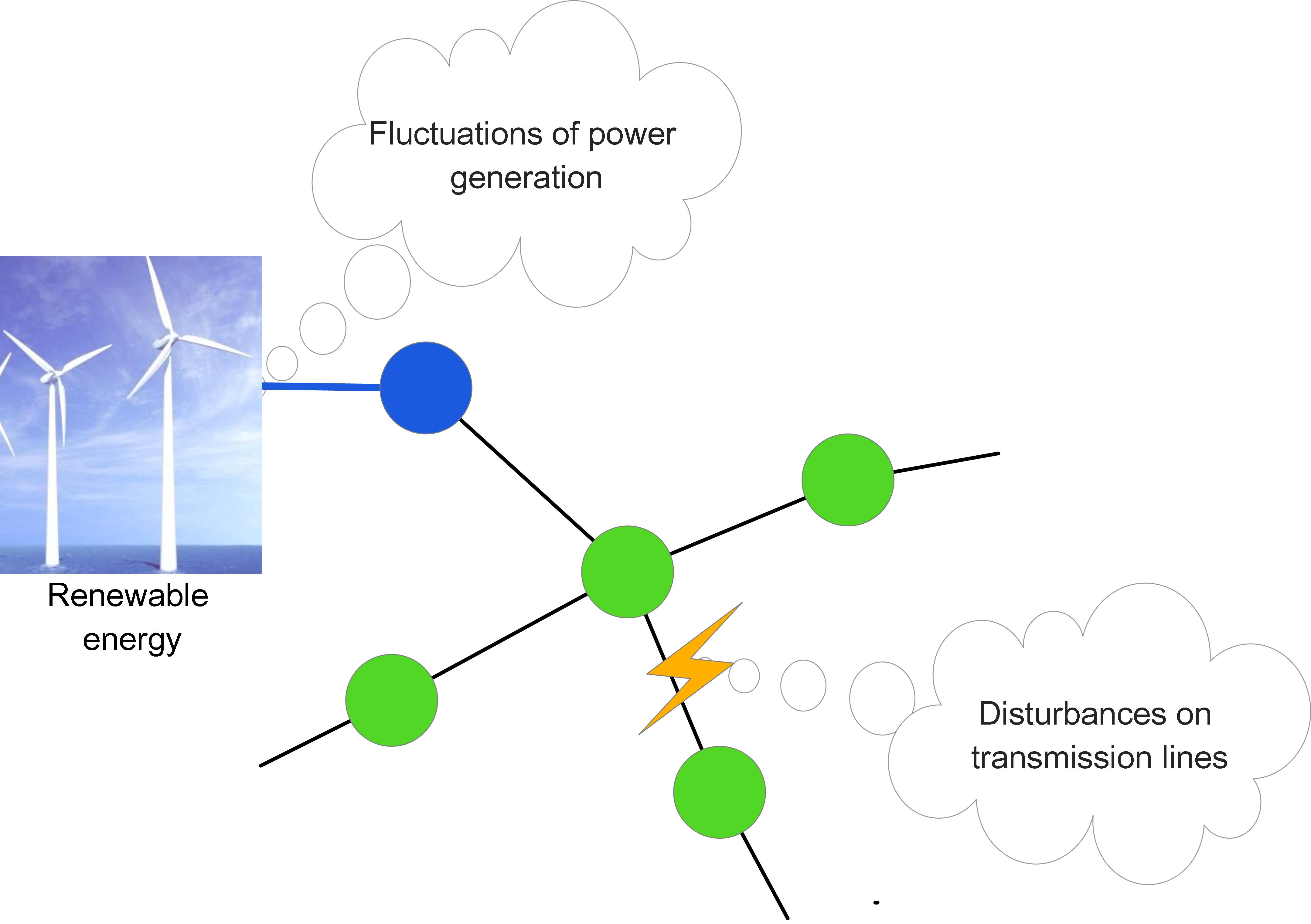}}\centering
\caption{\label{pts} Disturbances in power transmission system integrated with renewable energy sources. The blue ball denotes the generator bus that receives the power supply from renewable energy sources, and the green balls represent load buses that consume electric power. Two types of disturbances are taken into account in the power transmission system. One type of disturbances comes from the transmission line (e.g., branch outages due to the lightening or other contingencies), and the other is related to the fluctuations of injected power on generator buses.}
\end{figure}
Figure \ref{pts} presents an illustration of power transmission system that is composed of power generation from renewable energy sources (e.g., wind power and solar power) and power network. In the power system, The electric power is generated by power plants using renewable energy (e.g., wind energy and solar energy) and power stations with traditional energy (e.g., coal and natural gas). In practice, the fluctuations of power generation from renewable energy sources have great impact on the stability of power systems. Therefore, it is vital to identify the worst fluctuations that can cause the massive disruptions of power transmission system.

In this section, we will formulate the risk identification problem using optimal control by treating the fluctuations of power generation as control inputs. This is because power transmission system can be regarded as an optimal control system (see Table \ref{ana}). Specifically, the current, voltage, power and impedance can be described by the states in optimal control system, and the fluctuations of power generation from renewable energy sources can be treated as control inputs. In addition, the state equation describes the evolution of cascading failure process in power transmission systems, and the physical laws in electrical power systems (e.g., Kirchhoff Circuit Laws) are equivalent to state constraints in the control system. Moreover, the cost function can be designed to quantify the disruption level of power transmission system due to the fluctuations of power generation from renewable energy sources.

\begin{table}
\caption{\label{ana} Treating power transmission system as optimal control system.}
\begin{center}
    \begin{tabular}{| l | l |}
    \hline
    \bf{Power Transmission System} & \bf{Optimal Control System} \\ \hline
    Current, voltage, power, impedance & State variables                \\ \hline
    Fluctuations of power generation &   Control inputs                   \\ \hline
    Cascading process & State equation                 \\ \hline
    Physical laws & State constraints                  \\ \hline
    Worst-case disruptions & Cost function             \\ \hline
    \end{tabular}
\end{center}
\end{table}

\subsection{The cascading model of branch outage}
First of all, we present the cascading model of branch outage that describes the dynamics of power transmission system while suffering from the fluctuations of injected power on generator buses. For the high-voltage power transmission system, it is suggested that the direct current (DC) power flow equation can be adopted to approximate the power flow on the transmission line. For a high-voltage power transmission network with $n$ branches and $n_b$ buses, the DC power flow equation is given by
\begin{equation*}\label{dc_power}
P_b=A^Tdiag(Y_p)A\theta,
\end{equation*}
where $P_b\in R^{n_b}$ refers to the vector of injected power on buses, and $A\in R^{n\times n_b}$ is the branch-to-bus incidence matrix. $Y_p$ represents the vector of branch admittance, and $\theta$ denotes the vector of voltage angle on each bus. Then the solution to the above DC power flow equation is given by
$$
\theta=(A^Tdiag(Y_p)A)^{-1^*}P_b,
$$
where the operator $-1^*$ is a type of pseudo inverse as defined in \cite{cz17} (see Definition $5.1$ in the Appendix). Thus, the vector of power flow on $n$ branches can be computed by
\begin{equation*}
P=diag(Y_p)A(A^T diag(Y_p)A)^{-1^*}P_b,
\end{equation*}
where $P=(P_{ij})\in R^n$ with $i,j\in I_{n_b}=\{1,2,...,n_b\}$(see Subsection $B$ in the Appendix for the detail).

When the power flow on a branch exceeds the given threshold, the branch is tripped by the circuit breaker. The above protection mechanism can be described by the approximate function $g(P_{ij},c_{ij})$ as defined in \cite{czm17,cz17}, where $c_{ij}$ represents the threshold of power flow on the branch connecting Bus $i$ to Bus $j$, and $P_{ij}$ refers to the power flow on the corresponding branch. It is demonstrated  that $g(P_{ij},c_{ij})$ approximates to step function as the parameter $\sigma$ in it gradually increases.
In particular, $g(P_{ij},c_{ij})$ is differentiable with respect to $P_{ij}$, which enables us to apply optimal control theory for identifying the worst fluctuation of power generation from renewable energy sources. Then the dynamics of branch outage can be described as follows
\begin{equation}\label{outage}
    Y^{k+1}_p=G(P^k)\circ Y^{k}_p, \quad k=0,1,...,m-1
\end{equation}
where the vector $G(P^k)$ is defined as
$$
G(P^k)=\left(g(P^k_{i_1j_1},c_{i_1j_1}),g(P^k_{i_2j_2},c_{i_2j_2}),...,g(P^k_{i_nj_n},c_{i_nj_n})\right)^T\in R^n.
$$
The symbol $\circ$ denotes the Hadamard product, and $Y^{k}_p$ and $P^k$ refer to the vector of branch susceptance and the vector of power flow on branches at the $k$-th cascading step, respectively. To be precise, the cascading step is defined as one topology change of power networks due to branch outages. In addition, $P^k$ can be computed by
\begin{equation}\label{uk}
P^k=diag(Y^k_p)A(A^T diag(Y^k_p)A)^{-1^*}(P_b+\Lambda U^k),
\end{equation}
where $U^k$ denotes the vector of power fluctuations on buses with respect to the original vector $P_b$ at the $k$-th cascading step, and $diag(Y^k_p)$ allows to obtain the diagonal matrix with the diagonal entries being the elements of the vector $Y_p^k$ in sequence.
In addition, $\Lambda$ is a diagonal matrix with the diagonal elements being $0$ or $1$, and it allows us to select the desired bus by assigning $1$ to the corresponding diagonal element.

\begin{figure}
\scalebox{0.05}[0.05]{\includegraphics{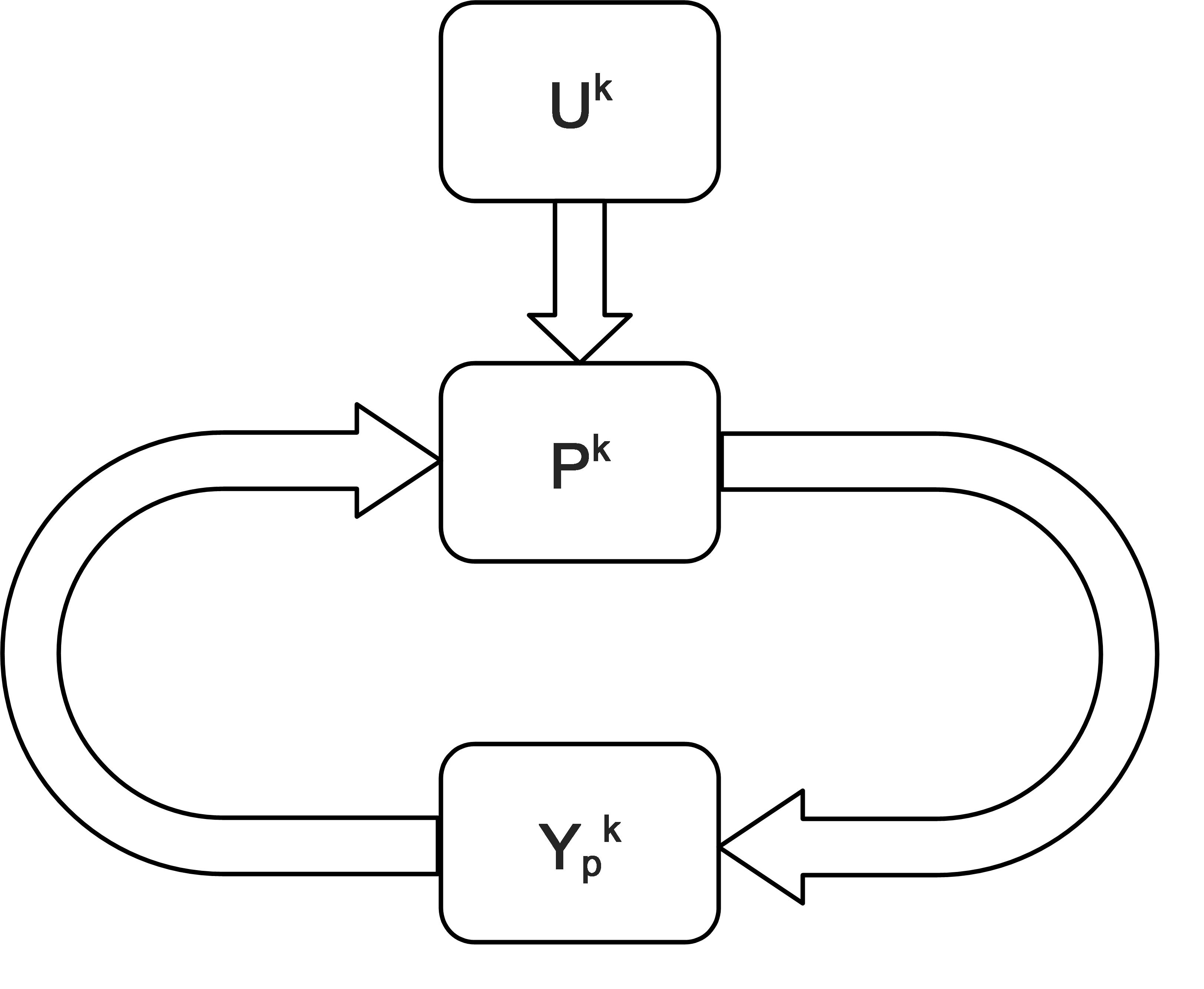}}\centering
\caption{\label{cas} Illustration on the cascading model of branch outage. The input $U^k$ is the vector of power fluctuations on buses at the $k$-th cascading step. $Y^{k}_p$ and $P^k$ represent the vector of branch susceptance and the vector of power flow on branches at the $k$-th cascading step, respectively.}
\end{figure}

Essentially, the cascading model of branch outage is obtained by integrating Equation (\ref{outage}) with Equation (\ref{uk}). Figure \ref{cas} presents the schematic sketch on the cascading model of branch outage. To be specific, the fluctuations of power generation from renewable energy sources (i.e., control inputs $U^k$) give rise to the branch overloads (i.e., the change of power flow $P^k$), which results in the branch outages and the topology change of power networks (i.e., the change of branch susceptance $Y_p^k$). In reserve, the change of network topology leads to the redistribution of power flow, which may result in the further branch outages of power grids. The above process proceeds until the power flow on each branch is less than its given threshold.

\subsection{Optimization formulation}
By treating the branch susceptance as the state in optimal control system, the cascading model of branch outage can serve as the state equation in optimal control theory. As mentioned before, our goal is to identify the fluctuations of injected power on buses that can result in the massive disruptions of power grids. Thus, it is necessary to come up with a performance index that is able
to quantify the disruption level of power grids caused by the fluctuations of injected power on buses. This can be achieved by the optimal control formulation as follows.
\begin{equation}\label{cost}
\min_{U^k}J(Y_p^m,U^k)
\end{equation}
with the cost function
\begin{equation}\label{cost_fun}
J(Y_p^m,U^k)=\|Y_p^m\|^2+\epsilon\sum_{k=0}^{m-1}\|U^k\|^2,
\end{equation}
where the cost function $J(Y_p^m,U^k)$ can be used to quantify the disruptions of power grids, and $m$ denotes the total number of cascading steps. There are two terms in the cost function. Specifically, the first term quantifies the connectivity of power networks at the end of cascading failures, and the second term describes the magnitude of power fluctuations on selected generator buses (i.e., control energy in optimal control system) at the different cascading steps. It is worth pointing out that the tunable parameter $\epsilon$ can be small enough so that the first term dominates the cost function, and thus more efforts are taken by the optimization algorithm to minimize the value of the first term.
\begin{remark}
The parameter $\epsilon$ can be assigned differently at the different cascading step, which allows us to control the fluctuation magnitude of power generation. For the larger $\epsilon$, the smaller $U^k$ (i.e., the weaker fluctuation) is required to result in the worst disruptions of power transmission system.
\end{remark}

\section{Theoretical Analysis}\label{sec:ana}
This section presents theoretical results on the optimal control problem (\ref{cost}), which has been formulated in Section \ref{sec:prob}. The following theorem provides the necessary condition for identifying the worst fluctuations of injected power on buses.
\begin{theorem}\label{sysAE}
The necessary condition for the optimal control problem (\ref{cost}) is obtained by solving the following system of algebraic equations
\begin{equation}\label{con_sys}
\left\{
  \begin{array}{ll}
    P^k=diag(Y^k_p)A(A^T diag(Y^k_p)A)^{-1^*}(P_b+\Lambda U^k) \\
    Y_p^{k+1}=G(P^k)\circ Y_p^k, \quad \quad k=0,1,...,m-1.
  \end{array}
\right.
\end{equation}
where the control input $U^k$ is given by
$$
U^k=-\frac{1}{\epsilon}\left(\frac{\partial Y_p^{k+1}}{\partial U^k}\right)^T\prod_{s=0}^{m-k-2}\frac{\partial Y_p^{m-s}}{\partial Y_p^{m-s-1}}\cdot Y_p^m
$$
with the term
$$
\frac{\partial Y_p^{k+1}}{\partial U^k}=diag(Y_p^k)\frac{\partial G(P^k)}{\partial P^k}diag(Y^k_p)A(A^T diag(Y^k_p)A)^{-1^*}\Lambda.
$$
\end{theorem}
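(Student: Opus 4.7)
My plan is to view (\ref{cost}) as a discrete-time optimal control problem with state $Y_p^k$, control $U^k$, and dynamics obtained by composing (\ref{outage}) with (\ref{uk}), and to apply the method of Lagrange multipliers (equivalently, the discrete-time Pontryagin minimum principle). Introducing a costate vector $\lambda^{k+1}\in R^n$ for each state-constraint, I would form the Lagrangian
\begin{equation*}
L = \|Y_p^m\|^2 + \epsilon\sum_{k=0}^{m-1}\|U^k\|^2 + \sum_{k=0}^{m-1}(\lambda^{k+1})^T\bigl[Y_p^{k+1}-G(P^k)\circ Y_p^k\bigr],
\end{equation*}
treating $Y_p^1,\ldots,Y_p^m$ and $U^0,\ldots,U^{m-1}$ as the free variables.

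The next step is to set each partial derivative of $L$ to zero in turn. Differentiation with respect to $Y_p^m$ yields the transversality condition $\lambda^m = -2Y_p^m$. Differentiation with respect to $Y_p^k$ for $1\le k\le m-1$ yields the backward costate recursion $\lambda^k = (\partial Y_p^{k+1}/\partial Y_p^k)^T\lambda^{k+1}$. Differentiation with respect to $U^k$ yields the stationarity condition $2\epsilon U^k = (\partial Y_p^{k+1}/\partial U^k)^T\lambda^{k+1}$. Unwinding the costate recursion from $k+1$ up to $m$ and invoking the multivariable chain rule collapses the iterated Jacobians into the product $\prod_{s=0}^{m-k-2}\partial Y_p^{m-s}/\partial Y_p^{m-s-1}$ acting on $\lambda^m=-2Y_p^m$, which, after substitution into the stationarity condition, reproduces the displayed formula for $U^k$. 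The explicit expression for $\partial Y_p^{k+1}/\partial U^k$ then falls out by applying the chain rule to $Y_p^{k+1}=diag(Y_p^k)G(P^k)$ together with the linear dependence $P^k=diag(Y_p^k)A(A^Tdiag(Y_p^k)A)^{-1^*}(P_b+\Lambda U^k)$, noting that $Y_p^k$ is frozen when differentiating with respect to $U^k$.

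The principal technical obstacle is the pseudo-inverse $(A^Tdiag(Y_p^k)A)^{-1^*}$, which is present precisely because sufficiently many branch outages may render the coefficient matrix singular once the network fragments into islands. Fortunately, in evaluating $\partial Y_p^{k+1}/\partial U^k$ the state $Y_p^k$ is held fixed, so the pseudo-inverse acts as a constant linear operator and the differentiation reduces to the standard chain rule; the only care required is to use the same $(\cdot)^{-1^*}$ consistently in the Jacobian as in the state equation, and to invoke the differentiability of $g$ with respect to its first argument established in \cite{czm17,cz17}. The remainder is bookkeeping to match the product and transpose conventions used in the statement.
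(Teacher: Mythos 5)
Your proposal is correct and follows essentially the same route as the paper: it derives the discrete-time first-order necessary conditions (stationarity in $U^k$, backward costate recursion, terminal condition $\lambda^m=\pm 2Y_p^m$ depending on the sign convention of the multiplier term), unwinds the costate product, and obtains $\partial Y_p^{k+1}/\partial U^k$ by the chain rule through $P^k$ with $Y_p^k$ held fixed, exactly as in the paper's proof via the cited discrete-time optimal control conditions. The only cosmetic difference is that you build the Lagrangian explicitly (yielding $\lambda^m=-2Y_p^m$) rather than quoting the textbook conditions (which give $\lambda_m=2Y_p^m$), and both conventions produce the same expression for $U^k$.
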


\begin{proof}
It follows from the discrete time optimal control theory \cite{fran95} that the necessary conditions for the optimal control problem (\ref{cost}) can be determined by the following four equations
\begin{equation*}\label{cond_st}
Y_p^{k+1}=G(P^k)\circ Y_p^{k}
\end{equation*}

\begin{equation}\label{cond_u}
\left(\frac{\partial Y_p^{k+1}}{\partial U^k}\right)^T\lambda_{k+1}+\epsilon\cdot\frac{\partial\|U^k\|^2}{\partial U^k}=0
\end{equation}

\begin{equation}\label{cond_y}
\lambda_k=\left(\frac{\partial Y_p^{k+1}}{\partial Y_p^k}\right)^T\lambda_{k+1}+\epsilon\cdot\frac{\partial\|U^k\|^2}{\partial Y_p^k}
\end{equation}

\begin{equation}\label{cond_final}
\lambda_m=\frac{\partial \|Y_p^m\|^2}{\partial Y_p^m}
\end{equation}
By solving (\ref{cond_u}), we have
\begin{equation}\label{cond_uk}
U^k=-\frac{1}{2\epsilon}\left(\frac{\partial Y_p^{k+1}}{\partial U^k}\right)^T\lambda_{k+1}.
\end{equation}
Since
$$
\frac{\partial\|U^k\|^2}{\partial Y_p^k}=0
$$
in Equation (\ref{cond_y}), we can obtain
\begin{equation*}\label{cond_lamb}
\lambda_k=\left(\frac{\partial Y_p^{k+1}}{\partial Y_p^k}\right)^T\lambda_{k+1}
\end{equation*}
with the terminal condition $\lambda_m=2Y_p^m$ according to (\ref{cond_final}). Therefore, we obtain
\begin{equation}\label{cond_lambd}
\lambda_{k+1}=2\prod_{s=0}^{m-k-2}\frac{\partial Y_p^{m-s}}{\partial Y_p^{m-s-1}}\cdot Y_p^m.
\end{equation}
Combining (\ref{cond_uk}) and (\ref{cond_lambd}), we have
\begin{equation*}\label{cond_ukh}
U^k=-\frac{1}{\epsilon}\left(\frac{\partial Y_p^{k+1}}{\partial U^k}\right)^T\prod_{s=0}^{m-k-2}\frac{\partial Y_p^{m-s}}{\partial Y_p^{m-s-1}}\cdot Y_p^m
\end{equation*}
with the term
$$
\frac{\partial Y_p^{k+1}}{\partial U^k}=diag(Y_p^k)\frac{\partial G(P^k)}{\partial P^k}diag(Y^k_p)A(A^T diag(Y^k_p)A)^{-1^*}\Lambda.
$$
This completes the proof.
\end{proof}

\begin{remark}
By substituting the control inputs
$$
U^k=-\frac{1}{\epsilon}\left(\frac{\partial Y_p^{k+1}}{\partial U^k}\right)^T\prod_{s=0}^{m-k-2}\frac{\partial Y_p^{m-s}}{\partial Y_p^{m-s-1}}\cdot Y_p^m
$$
into the system (\ref{con_sys}) and replacing the term
$$
\frac{\partial Y_p^{k+1}}{\partial U^k}
$$
in the control input $U^k$ with
$$
diag(Y_p^k)\frac{\partial G(P^k)}{\partial P^k}diag(Y^k_p)A(A^T diag(Y^k_p)A)^{-1^*}\Lambda,
$$
simultaneously, we can obtain a system of algebraic equations with the unknowns $Y_p^k$, $k=1,2,...,m$. The solutions to the above system of algebraic equations enable us to compute the optimal control input $U^k$ at each cascading step, which corresponds to the worst fluctuation of injected power on buses.
\end{remark}

\begin{remark}
In practice, the computation burden of solving the system (\ref{con_sys}) with the unknowns $Y_p^k$ is relatively high. This is because both the dimension of system (\ref{con_sys}) and the number of unknowns are quite large for a large-scale power transmission system. Moreover, the operations of pseudo inverse $-1^{*}$ and partial derivatives in the system of algebraic equations require much computation time for the large power networks. To reduce the computation time of the numerical optimization algorithm, it is necessary to simplify the  system (\ref{con_sys}) by approximating the terms of high computation complexity. For example, the two terms of high computation complexity in the system (\ref{con_sys}) can be approximated as follows.
$$
\frac{\partial G(P^k)}{\partial P^k}\approx\left(
                                             \begin{array}{c}
                                               \frac{[G(P^k+\delta e_1)-G(P^k)]^T}{\delta} \\
                                               \frac{[G(P^k+\delta e_2)-G(P^k)]^T}{\delta} \\
                                               . \\
                                               \frac{[G(P^k+\delta e_n)-G(P^k)]^T}{\delta} \\
                                             \end{array}
                                           \right)^T
$$
and
\begin{equation*}
\begin{split}
\prod_{s=0}^{m-k-2}\frac{\partial Y_p^{m-s}}{\partial Y_p^{m-s-1}}&=\frac{\partial Y_p^m}{\partial Y_p^{k+1}}
\approx \left(
           \begin{array}{c}
           \frac{[Y_p^m(Y_p^{k+1}+\delta e_1)-Y_p^m(Y_p^{k+1})]^T}{\delta} \\
           \frac{[Y_p^m(Y_p^{k+1}+\delta e_2)-Y_p^m(Y_p^{k+1})]^T}{\delta} \\
           . \\
           \frac{[Y_p^m(Y_p^{k+1}+\delta e_n)-Y_p^m(Y_p^{k+1})]^T}{\delta} \\
           \end{array}
         \right)^T,
\end{split}
\end{equation*}
where $e_i\in R^n$ denotes the $n$ dimensional unit vector with the $i$-th element being 1 and all other elements being 0, and the parameter $\delta$ should be sufficiently small.
\end{remark}

\section{NUMERICAL SIMULATIONS}\label{sec:sim}
\begin{figure}
\scalebox{2.4}[2.4]{\includegraphics{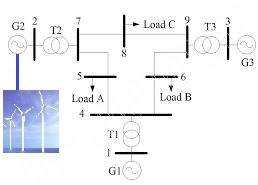}}\centering
\caption{\label{bus9} IEEE 9 Bus System with power supply from renewable energy sources.}
\end{figure}
In this section, the numerical algorithm is adopted to obtain solutions to the system (\ref{con_sys}). In addition, numerical simulations are carried out to validate the proposed optimal control approach to identify the fluctuations of injected power on the selected bus of IEEE 9 Bus System \cite{zim11}. As is shown in Fig.~\ref{bus9}, Bus $2$ is connected to the renewable energy sources (i.e, wind turbines), which leads to the fluctuations of injected power on this generator bus due to the variation of wind speed. The other two generators can obtain the stable power supply, and the fluctuations of injected power on these two buses are negligible. Thus, the elements of the diagonal matrix $\Lambda$ is defined by
$$
\Lambda(i,j)=\left\{
  \begin{array}{ll}
    1, & \hbox{$i=j=2$;} \\
    0, & \hbox{otherwise.}
  \end{array}
\right.
$$
For simplicity, per unit values are adopted with the base value of power
100 MVA in numerical simulations. Other parameters are given as follows: $\epsilon=10$, $m=4$, $\delta=0.01$ and $\sigma=1$ in the approximate function. The threshold vector of power flow on branches is $(1, 1.8, 1, 0.5, 1, 1, 1, 1, 1)$. The function ``fsolve" in Matlab (R2017b) is employed to solve the system of algebraic equations (\ref{con_sys}) and then compute the control input $U^k$ (i.e., fluctuation of power generation) at each cascading step. After implementing the numerical solver for the system (\ref{con_sys}), the control inputs $U^k$, $k\in\{1,2,3\}$ on the selected bus (i.e., fluctuations of injected power on Bus $2$) are identified from Step $1$ to Step $3$ as follows: $-1.18$,  $-1.24$ and $1.46$.
\begin{figure}
\scalebox{0.07}[0.07]{\includegraphics{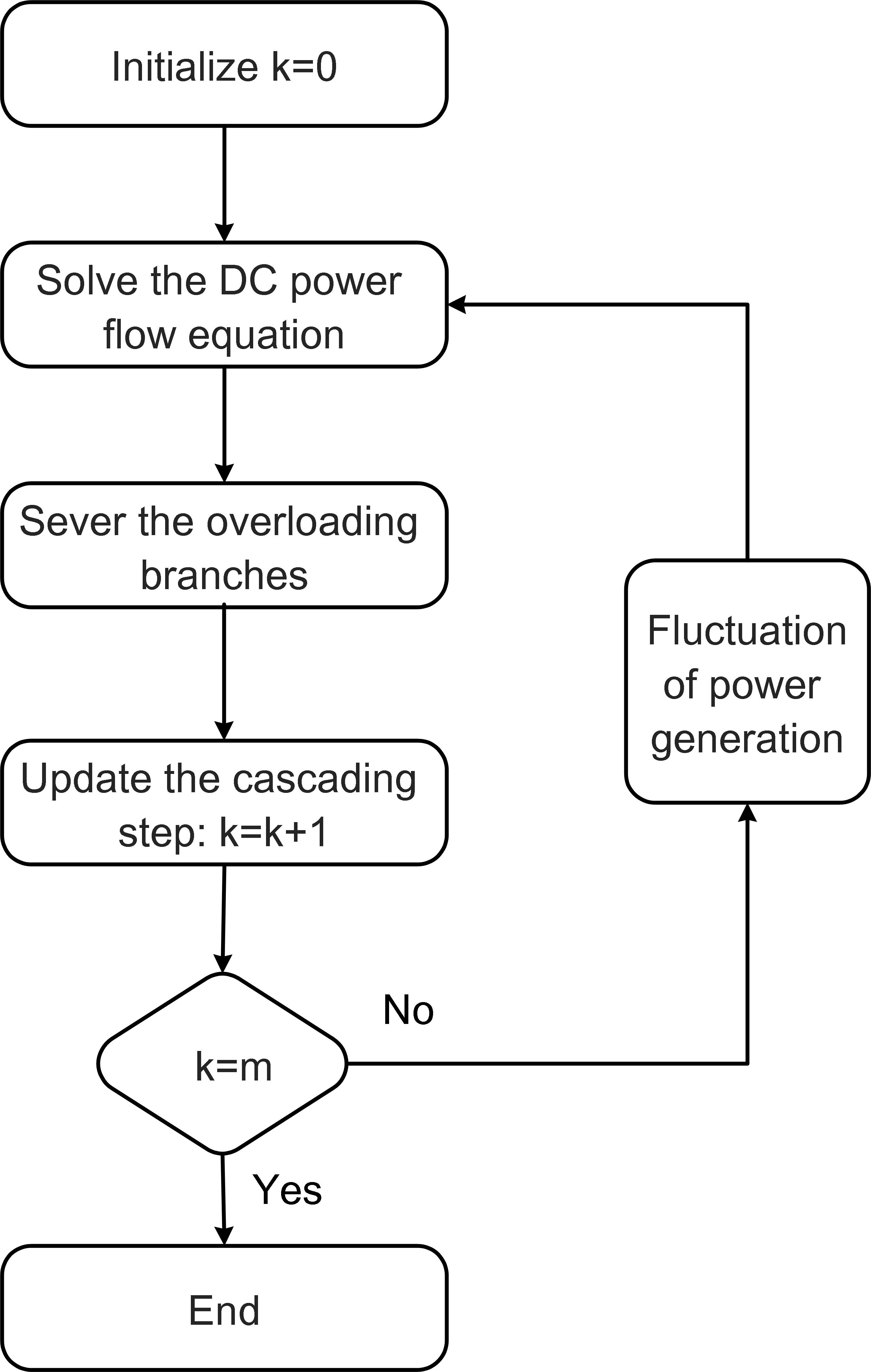}}\centering
\caption{\label{chart} Flow chart of numerical simulations.}
\end{figure}

Figure \ref{chart} demonstrates the process of numerical simulations to validate the proposed approach by adding the fluctuations of power generation on the selected generator bus at each cascading step. Specifically, after initializing the simulation program, the DC power flow equation is solved to obtain the power flow on each branch. Then the overloading branch will be severed if the power flow exceeds the given threshold. Next, the identified fluctuations of power generation (i.e, control inputs $U^k$) are added on the selected bus after updating the cascading step. This will cause the redistribution of power flow on branches and further branch outages in the next round. When the cascading process reaches the $m$-th step, the simulation comes to an end. Figure \ref{flu} presents the evolution of cascading failure on IEEE 9 Bus System by adding the identified control inputs (i.e., fluctuations of injected power) on Bus $2$ at each cascading step. As we can see, the power network is separated into $7$ isolated islands after 4 cascading steps. Note that three buses (i.e., Bus 4, Bus 6 and Bus 9) are connected through two branches without any power flow due to the absence of power supply from generator buses. Moreover, the other $6$ buses are all isolated from each other and fail to survive as well. This implies that the power network is completely damaged by the fluctuations of power generation from renewable energy sources, which are identified by the proposed approach.

\begin{figure}
\scalebox{0.72}[0.72]{\includegraphics{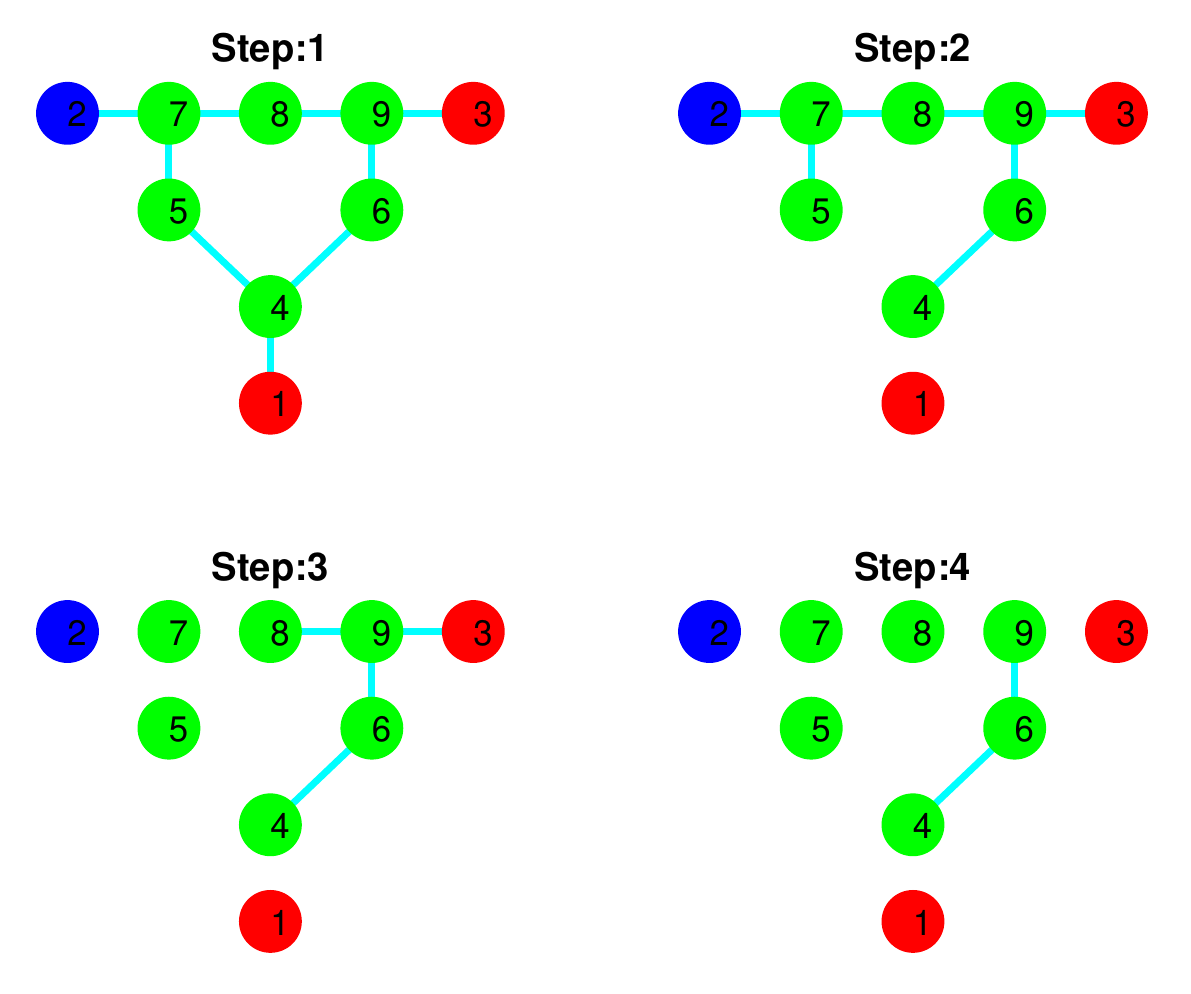}}\centering
\caption{\label{flu} Cascading process of power transmission system with the fluctuations of injected power on Bus $2$. The green balls denote load buses, and the red ones refer to generator buses that can receive the stable power supply. The blue ball (i.e., Bus $2$) is also a generator bus, and it is connected to renewable energy sources that may give rise to the fluctuation of power generation.}
\end{figure}

\section{CONCLUSIONS}\label{sec:con}
In this paper, we investigated the effects of power fluctuations caused by renewable energy sources on the stability and security of power transmission system. The electric power from renewable energy sources was injected into the generator bus of power transmission system.
By regarding the fluctuations of injected power as control inputs, the risk identification problem of power system was formulated with optimal control theory. An approach was developed to identify the severe fluctuations of power generation from renewable energy sources, and it was validated on IEEE $9$ Bus Systems. In the next step, we will extend the proposed approach to power distribution system by taking into account both the AC power flow and the transient process of power grids.

\section*{ACKNOWLEDGMENTS}
This work is partially supported by the Future Resilient Systems Project at the Singapore-ETH Centre (SEC), which is funded by the National Research Foundation of Singapore (NRF) under its Campus for Research Excellence and Technological Enterprise (CREATE) program. It is also partially supported by Ministry of Education of Singapore under contract MOE2016-T2-1-119.

\section*{APPENDIX}\label{app:def}

\subsection{Definition of Operator $-1^{*}$}
Suppose the nodal admittance matrix $Y_b=A^T\texttt{diag}(Y_p)A$ is composed of $q$ isolated subnetworks denoted by $\mathrm{S}_i, i\in I_q=\{1,2,...,q\}$ and each subnetwork $\mathrm{S}_i$ includes $k_i$ buses. Let $V_i=\{i_1,i_2,...,i_{k_i}\}$ denote the set of bus IDs in the subnetwork $S_i$, where $i_1$, $i_2$,..., $i_{k_i}$ denote the bus IDs. Notice that Bus $i_1$ in Subnetwork $\mathrm{S}_i$ is designated as the reference bus. Moreover, the nodal admittance matrix of the $i$-th subnetwork can be computed as $Y_{b, i}=\mathcal{E}_i^TY_{b}\mathcal{E}_i$, $i\in I_q$, where $\mathcal{E}_i=(e_{i_1}, e_{i_2} ,..., e_{i_{k_i}})$. Then the operator $-1^*$ is defined as follows \cite{czm17}.
\begin{defn}
Given the nodal admittance matrix $Y_b$, the operator $-1^*$ is defined by
\begin{equation*}
\begin{split}
\left(Y_{b}\right)^{-1^*}&=\sum_{i=1}^{q}\mathcal{E}_i\left(
           \begin{array}{c}
             0^T_{k_i-1} \\
             I_{k_i-1} \\
           \end{array}
         \right)\left(\mathcal{Y}_{b,i}\right)^{-1}\left(
                                                                                                 \begin{array}{cc}
                                                                                                   0_{k_i-1} & I_{k_i-1} \\
                                                                                                 \end{array}
                                                                                               \right)\mathcal{E}_i^T,
\end{split}
\end{equation*}
where
$$
\mathcal{Y}_{b,i}=\left(
     \begin{array}{cc}
       0_{k_i-1} & I_{k_i-1} \\
     \end{array}
   \right)
Y_{b,i}\left(
           \begin{array}{c}
             0^T_{k_i-1} \\
             I_{k_i-1} \\
           \end{array}
         \right).
$$
$I_{k_i-1}$ is the $(k_i-1)\times(k_i-1)$ identity matrix, and $0_{k_i-1}$ denotes the $(k_i-1)$ dimensional column vector with zero elements.
\end{defn}

\subsection{Computation of the power flow}
According to the DC power flow equation
$$
P_b=A^Tdiag(Y_p)A\theta,
$$
we have
$$
\theta=(A^Tdiag(Y_p)A)^{-1^*}P_b,
$$
Thus, the vector of power flow on the branches can be computed by
\begin{equation*}
\begin{split}
P&=diag(Y_p)A\cdot\theta\\
&=diag(Y_p)A(A^Tdiag(Y_p)A)^{-1^*}P_b.
\end{split}
\end{equation*}


\begin{thebibliography}{99}

\bibitem{pan11} Panwar, N. L., Kaushik, S. C. and Kothari, S., 2011. Role of renewable energy sources in environmental protection: a review. Renewable and Sustainable Energy Reviews, 15(3), pp. 1513-1524.

\bibitem{car06} Carrasco, J.  M., Franquelo, L.G., Bialasiewicz, J. T., et al., 2006. Power-electronic systems for the grid integration of renewable energy sources: A survey. IEEE Transactions on Industrial Electronics, 53(4), pp. 1002-1016.

\bibitem{mat09} Mathiesen, B. V. and Lund, H., 2009. Comparative analyses of seven technologies to facilitate the integration of fluctuating renewable energy sources. IET Renewable Power Generation, 3(2), pp. 190-204.

\bibitem{lun03} Lund, H. and M\"{u}nster, E., 2003. Management of surplus electricity-production from a fluctuating renewable-energy source. Applied Energy, 76(1-3), pp. 65-74.

\bibitem{con12} Connolly, D., Lund, H., Mathiesen, B.V., Pican, E. and Leahy, M., 2012. The technical and economic implications of integrating fluctuating renewable energy using energy storage. Renewable energy, 43, pp. 47-60.

\bibitem{lu96} Lu, W., et al., 1996. Blackouts: Description, analysis and classification. Network, 2, p.14.

\bibitem{epp12} Eppstein, M. J. and Hines, P. D., 2012. A ``random chemistry" algorithm for identifying collections of multiple contingencies that initiate cascading failure. IEEE Transactions on Power Systems, 27(3), pp. 1698-1705.

\bibitem{kim16} Kim, T., Wright, S. J., Bienstock, D. and Harnett, S., 2016. Analyzing vulnerability of power systems with continuous optimization formulations. IEEE Transactions on Network Science and Engineering, 3(3), pp. 132-146.

\bibitem{czm17} Zhai, C., Zhang, H., Xiao, G. and Pan, T. C., 2017. Modeling and identification of worst-case cascading failures in power systems. arXiv preprint arXiv:1703.05232.

\bibitem{cziw17} Zhai, C., Zhang, H., Xiao, G. and Pan, T. C., 2017, December. Comparing different models for investigating cascading failures in power systems. In Complex Systems and Networks (IWCSN), 2017 International Workshop on (pp. 230-236). IEEE.

\bibitem{zhang17} Zhang, H., Zhai, C., Xiao, G. and Pan, T.C., 2017. Identifying critical risks of cascading failures in power systems. arXiv preprint arXiv:1705.09411.

\bibitem{cz17} Zhai, C., Zhang, H., Xiao, G. and Pan, T. C., 2017. A Model Predictive Approach to Preventing Cascading Failures of Power Systems. arXiv preprint arXiv:1710.05184.

\bibitem{fran95} Lewis, F. L., Vrabie, D. and Syrmos, V. L., 2012. Optimal control. John Wiley \& Sons.

\bibitem{zim11} Zimmerman, R. D., Murillo-S¨¢nchez, C. E. and Thomas, R. J., 2011. MATPOWER: Steady-state operations, planning, and analysis tools for power systems research and education. IEEE Transactions on Power Systems, 26(1), pp.12-19.

\end{thebibliography}
\end{document}